\documentclass[12pt]{article}

\usepackage{amsmath}
\usepackage{amsfonts,amssymb,amsthm,overpic}
\makeatletter
\newcommand{\diam}{\mathop{\operator@font diam}}
\usepackage{setspace}
\usepackage{multicol}
\usepackage{multirow}
\usepackage[compact]{titlesec}

\usepackage{epsfig}

\setlength{\textwidth}{470pt}
\setlength{\textheight}{8.6in}
\setlength{\topmargin}{-10pt}
\setlength{\oddsidemargin}{0pt}
\setlength{\evensidemargin}{0pt}
\addtolength{\voffset}{0pt}
\addtolength{\footskip}{0pt}

\newtheorem{definition}{Definition}[section]

\newtheorem{theorem}{Theorem}[section]

\newtheorem{lemma}{Lemma}[section]

\newcommand{\cT}{\mathcal{T}}

\begin{document}

\title{\Huge{\textsc{Spacetime Singularities vs. Topologies in the Zeeman-G\"obel Class }}}

\author{Kyriakos Papadopoulos$^1$, Basil K. Papadopoulos$^2$ \\
\small{1. Department of Mathematics, Kuwait University, PO Box 5969, Safat 13060, Kuwait} \\
\small{2. Department of Civil Engineering, Democritus University of Thrace, Greece}}

\date{}

\maketitle

\begin{abstract}
In this article we first observe that the Path topology of Hawking, King and MacCarthy is an analogue, in curved spacetimes, of a topology that was suggested by Zeeman as an alternative topology to his so-called Fine topology in Minkowski spacetime. We then review a result of a recent paper on spaces of paths and the Path topology, and see that there are at least five more topologies in the class $\mathfrak{Z}-\mathfrak{G}$ of Zeeman-G\"obel topologies which admit a countable basis, incorporate the causal and conformal structures, but the Limit Curve Theorem fails to hold. The "problem" that L.C.T. does not hold can be resolved by "adding back" the light-cones in the basic-open sets of these topologies, and create new basic open sets for new topologies. But, the main question is: do we really need the L.C.T. to hold, and why? Why is the manifold topology, under which the group of homeomorphisms of a spacetime is vast and of no physical significance (Zeeman), more preferable from an appropriate topology in the class $\mathfrak{Z}-\mathfrak{G}$ under which a homeomorphism is an isometry (G\"obel)? Since topological conditions that come as a result of a causality requirement are key in the existence of singularities in general relativity, the global topological conditions that one will supply the spacetime manifold might play an important role  in describing the transition from the quantum non-local
theory to  a classical local theory.
\end{abstract}

\section{Motivation.}

In paper \cite{Ordr-Ambient-Boundary} the authors talked about the causal structure of the ``ambient boundary'',
which is the conformal infinity of a five dimensional ``ambient space''. The whole deal of this construction
was to put relativity theory into a wider frame, in order to understand better the nature of the spacetime
singularities. Actually, there is an unexplored ocean in the literature of singularities: do they belong
to the spacetime or not? Or, better, are there singularities which do not belong to the spacetime and
others which do belong to? Of course, as soon as there is spacetime, there are events and, for every event,
there is a null cone and a Lorentzian metric. Having this in mind, Zeeman, G\"obel and Hawking-King-McCarthy
built natural topologies, which incorporate the causal, differential and conformal structures, against
the manifold topology which is alien to the most fundamental properties of spacetime, for example the
structure of the null cone. From a mathematical point of view, a spacetime is not a complete model, if
it is not equipped with an appropriate ``natural'' topology. By natural we mean a topology compatible
with its most fundamental structures. Having in mind that the manifold topology is not natural, one
can easily ask the question: why do we use such a topology to prove the validity of theorems
like the Limit Curve Theorem, while other, much more natural topologies, are ignored.

\section{Preliminaries.}

Throughout the text, unless otherwise stated, by $M$ we will denote any spacetime manifold and not, restrictively,
the Minkowski space. In \cite{Zeeman1} the null-cones where viewed in a topological context, so when
one constructs a topology in a spacetime incorporating its causal structure, one will consider the time-cone of an event (interior of the null-cone), the light-cone (the boundary),
 the causal-cone (interior union boundary) and the complement of the causal-cone (exterior); Zeeman calls this exterior ``space-cone'' and he proposes (in \cite{Zeeman1})
an alternative topology based on such space-cones, in its construction. Since the results of Zeeman have been
proven to extend to any curved spacetime by G\"obel (see \cite{gobel}), the construction
of topologies which incorporate the causal structure is independent on whether
the null-cones are affected by the curvature of the spacetime. As soon as there is spacetime there
are events and as soon there is an event there is a null-cone; it is its interior, boundary
and exterior which will play a role in a topological construction, and not the linear structure
of a spacitme like the Minkowski space or the curvature of a spacetime manifold.

\subsection{Causality.}

In spacetime geometry, one can introduce three causal relations, namely, the
chronological order $\ll$, the causal order $\prec$ and the relation horismos $\rightarrow$, and these can be meaningfully extended to  any \emph{event-set}, a set
$(X,\ll,\prec,\rightarrow)$ equipped with all three of these orders having no metric \cite{Penrose-Kronheimer,Penrose-difftopology}. In this context we say that the event $x$ chronologically precedes an event $y$, written
$x\ll y$ if $y$ lies inside the future null cone of $x$, $x$ {\em causally precedes} $y$,
$x \prec y$, if $y$ lies inside or on the future null cone of $x$ and  $x$ is at {\em horismos}
with $y$, written $x \rightarrow y$, if $y$ lies on the future null cone of $x$. The chronological order is irreflexive, while the causal order and horismos are reflexive. Then, the notations $I^+(x) = \{y \in M : x \ll y\},  J^+(x) = \{y \in M : x \prec y\}$ will be used for the chronological and the causal futures  of $x$ respectively (and with a minus instead of a plus sign  for the pasts), while the future null cone of $x$ will be denoted by $\mathcal{N}^+(x)\equiv\partial J^{+}(x)= \{y \in M : x \rightarrow y\}$, and dually for the null past of $x$, cf. \cite{Penrose-difftopology}.

The above definitions of  futures and pasts of a set can be trivially extended to the situation of any partially ordered set $(X,<)$.
In a purely topological context this is usually done by passing to the so-called upper (i.e. future) and lower (i.e. past)
sets which in turn lead to the  future  and past  topologies (see \cite{Compendium}, for the special case of a lattice;
here our topologies are constructed in a similar manner, but are weaker, since they do not depend on lattice
orders, but on weaker relations).
A subset $A \subset X$ is a {\em past set} if $A = I^-(A)$ and dually for the future. Then, the {\em future topology} $\cT^+$ is generated
by the subbase $\mathcal{S}^+ = \{X \setminus I^-(x) : x \in X\}$
and the {\em past topology} $\cT^-$ by $\mathcal{S}^- = \{X \setminus I^+(x)  : x \in X\}$.
The {\em interval topology} $\cT_{in}$ on $X$ then consists of basic sets which are finite intersections
of subbasic sets of the past and the future topologies. This is in fact the topology that fully characterises a given order of the poset $X$. Here we clarify that the names ``future topology'' and ``past topology'' are the best possible inspirations for names that
came in the mind of the authors, but are new to the literature. The motivation was
that they are generated by
complements of past and future sets, respectively (i.e. closures of future and
past sets, respectively). Also, the authors wish to highlight the distinction between
the interval topology $\cT_{in}$ which appears to be of an important significance in
lattice theory, from the ``interval topology'' of A.P. Alexandrov (see \cite{Penrose-difftopology}, page 29).
$\cT_{in}$ is of a more general nature, and it can be defined via any relation, while the Alexandrov
topology is restricted to the chronological order. These two topologies are different in nature,
as well as in definition, so we propose the use of ``interval topology'' for $\cT_{in}$ exclusively,
and not for the Alexandrov topology. It is worth mentioning that the Alexandrov topology being
Hausforff is equivalent to the Alexandrov topology being equal to the manifold topology which is equivalent
to the spacetime being strongly causal; the topologies that we mention is this paper are
not equal to the manifold topology.

The so-called \emph{orderability  problem} is concerned with the conditions under which the topology $\cT_<$ induced
by the order $<$  is equal to some given topology $T$ on $X$ (\cite{Good-Papadopoulos}, \cite{Orderability-Theorem},
\cite{OnProperties} and \cite{Nestsandtheirrole}). In \cite{On-Two-Zeeman-Topologies} and \cite{Order-Light-Cone}
the authors found specific solutions of the orderability problem for six distinct  topologies in the class
of Zeeman-G\"obel. In these article, we consider interval topologies which, together with the manifold topology,
produce intersection topologies that belong to the clas of Zeeman-G\"obel and under which the
Limit Curve Theorm fails. Here we remind the definition of intersection topology.

\begin{definition}\label{intersection topology}
If $T_1$ and $T_2$ are two distinct topologies on a set $X$, then the {\em intersection topology} $T^{int}$
with respect to $T_1$ and $T_2$, is the topology on $X$ such that the set $\{U_1 \cap U_2 : U_1 \in T_1, U_2 \in T_2\}$
forms a base for $(X,T)$.
\end{definition}

and the following useful lemma (see \cite{Order-Light-Cone}):

\begin{lemma}\label{before main theorem}

Let $T_1$ and $T_2$ be two topologies on a set $X$, with bases $\mathcal{B}_1$ and $\mathcal{B}_2$ respectively and
let $T^{int}$ be their intersection topology, provided that it exists. Then,
the following two hold.

\begin{enumerate}

\item The collection $\mathcal{B}^{int} = \{B_1 \cap B_2 : B_1 \in \mathcal{B}_1, B_2 \in \mathcal{B}_2\}$ forms
a base for $T^{int}$.

\item If $\mathcal{B}^{int}$ is a base for a topology, then this topology is $T^{int}$.

\end{enumerate}
\end{lemma}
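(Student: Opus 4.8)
The plan is to handle both parts through a single observation: the defining base of $T^{int}$, namely $\mathcal{C} = \{U_1 \cap U_2 : U_1 \in T_1,\ U_2 \in T_2\}$, and the candidate family $\mathcal{B}^{int}$ generate one and the same topology. First I would record the trivial inclusion $\mathcal{B}^{int} \subseteq \mathcal{C}$: every $B_1 \in \mathcal{B}_1$ lies in $T_1$ and every $B_2 \in \mathcal{B}_2$ lies in $T_2$, so each $B_1 \cap B_2$ is already one of the defining basic sets; in particular every member of $\mathcal{B}^{int}$ is $T^{int}$-open.

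The heart of the argument is the reverse direction, which rests on distributing intersection over arbitrary union. Given $U_1 \cap U_2 \in \mathcal{C}$, I would expand each factor in its own base, $U_1 = \bigcup_\alpha B_1^\alpha$ with $B_1^\alpha \in \mathcal{B}_1$ and $U_2 = \bigcup_\beta B_2^\beta$ with $B_2^\beta \in \mathcal{B}_2$, and then use
\[
U_1 \cap U_2 = \Bigl(\bigcup_\alpha B_1^\alpha\Bigr) \cap \Bigl(\bigcup_\beta B_2^\beta\Bigr) = \bigcup_{\alpha,\beta} \bigl(B_1^\alpha \cap B_2^\beta\bigr),
\]
each $B_1^\alpha \cap B_2^\beta$ being a member of $\mathcal{B}^{int}$. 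Thus every defining basic set is a union of members of $\mathcal{B}^{int}$. Combined with the inclusion of the previous paragraph, this shows $\mathcal{C}$ and $\mathcal{B}^{int}$ have the same generated topology: members of $\mathcal{B}^{int}$ are open in $T^{int}$, and conversely every $T^{int}$-open set, being a union of members of $\mathcal{C}$, is a union of members of $\mathcal{B}^{int}$. Since a base determines the topology it generates, this is exactly part (1).

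For part (2) I would run the same two inclusions, this time without presupposing the base status of $\mathcal{B}^{int}$. Suppose $\mathcal{B}^{int}$ is a base for some topology $T'$. From $\mathcal{B}^{int} \subseteq \mathcal{C} \subseteq T^{int}$ we get $T' \subseteq T^{int}$, as $T'$ is the smallest topology containing $\mathcal{B}^{int}$. For the opposite inclusion, the displayed identity shows each $U_1 \cap U_2 \in \mathcal{C}$ is a union of $\mathcal{B}^{int}$-members and hence is $T'$-open; therefore $\mathcal{C} \subseteq T'$ and so $T^{int} \subseteq T'$. Hence $T' = T^{int}$. (Logically, under the standing assumption that $T^{int}$ exists, this is already contained in part (1); I present the inclusions directly because that is the form in which the statement is applied, where one verifies the base axioms for $\mathcal{B}^{int}$ by hand.)

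I do not expect a genuine obstacle here: the result is a routine verification once the distributive identity is in place. The only points demanding care are bookkeeping ones. In part (1) one must keep straight which family is assumed to be a base, and remember that coincidence of the two generated topologies is precisely the assertion that $\mathcal{B}^{int}$ is a base for $T^{int}$. In part (2) one must resist assuming a priori that $\mathcal{B}^{int}$ satisfies the base axioms, since in general it need not, and instead invoke only the fact, immediate from the Definition, that $\mathcal{C}$ is a base for $T^{int}$, drawing on the base status of $\mathcal{B}^{int}$ solely through the hypothesis.
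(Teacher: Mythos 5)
Your proof is correct. The paper itself gives no proof of this lemma --- it is quoted from the reference \cite{Order-Light-Cone} --- so there is nothing to compare against line by line, but your argument (the inclusion $\mathcal{B}^{int}\subseteq\mathcal{C}$ in one direction and the distributive identity $\bigl(\bigcup_\alpha B_1^\alpha\bigr)\cap\bigl(\bigcup_\beta B_2^\beta\bigr)=\bigcup_{\alpha,\beta}(B_1^\alpha\cap B_2^\beta)$ in the other) is the standard and essentially the only natural route, and your parenthetical remark correctly identifies why part (2) is stated separately even though it is formally subsumed by part (1).
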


It is rather surprising, as we shall see, that the interval topology, generated from
either the chronological order, the relation horismos or a spacelike non-causal order that
we will define shortly, is one of the two constituents for several Zeeman-G\"obel topologies
which are actually intersection topologies,
the other constituent being the manifold topology.

\subsection{The Class of Zeeman-G\"obel Topologies.}

The class $\mathfrak{Z}$, of Zeeman topologies, is the class of topologies
on the Minkowski space $M$ strictly finer than the Euclidean topology and strictly coarser
than the discrete topology,
which have the property that they induce the $1$-dimensional Euclidean topology on every time
axis and the $3$-dimensional Euclidean topology on every space
axis.

Zeeman (see \cite{Zeeman2} and \cite{Zeeman1}) showed that
the causal structure of the null cones on the Minkowski
space determines its linear structure. After initiating
the question on whether a topology on Minkowski space, which
depends on the light cones, implies its linear structure as well,
he constructed the Fine Topology, $F$, which is defined as the finest topology for $M$ in the class
$\mathfrak{Z}$.

$F$ satisfies, among other properties, the
following two theorems:

\begin{theorem}\label{Theorem2}
Let $f : I \to M$ be a continuous map of the unit interval $I$
 into $M$. If $f$ is strictly $\ll$-preserving, then the image $f(I)$ is a
 piecewise linear path, consisting of a finite number of intervals along time axes.
 \end{theorem}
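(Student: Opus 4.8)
The plan is to reduce the global assertion to a purely \emph{local} one and then to exploit the defining property of the Fine topology $F$ through a separation argument. First I would record that, since $F$ is strictly finer than the Euclidean topology, the map $f$ is automatically Euclidean-continuous; moreover, strict $\ll$-preservation forces the time coordinate of $f$ to be strictly increasing, so after reparametrising by the time coordinate we may write $f(t)=(t,\vec g(t))$ with $|\vec g(t)-\vec g(s)|<|t-s|$ for $s<t$. Thus $\vec g$ is a strict contraction, hence Lipschitz and injective, and $f$ is an injective timelike arc. It therefore suffices to prove that each parameter $t_{0}\in I$ has a one-sided neighbourhood on which $f$ traces a single straight timelike segment, i.e. lies on one time axis through $p:=f(t_{0})$; piecewise linearity with finitely many pieces will then follow by compactness.

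The heart of the proof is an \emph{exclusion lemma} for $F$: if a set $S=\{q_{n}\}$ accumulates, in the Euclidean sense, only at $p$, and no single time axis through $p$ contains infinitely many of the $q_{n}$, then $B\setminus S$ is $F$-open for every Euclidean ball $B$ centred at $p$. To see why this should hold, note that $B\setminus S$ is $F$-open exactly when $S$ meets every time axis and every space axis in a relatively closed set. Since the only Euclidean accumulation point of $S$ is $p$, any time axis or space axis \emph{not} through $p$ meets $S$ in a finite set; a space axis through $p$ is a spacelike hyperplane and so meets the open timelike cone $I^{+}(p)\cup I^{-}(p)$, where the $q_{n}$ live, only at $p$ itself; and a time axis through $p$ contains at most the finitely many $q_{n}$ sharing its direction. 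Hence $S$ is $F$-closed away from $p$, while $p\in B\setminus S$ because $f$ is injective. The geometric inputs are elementary---two distinct lines meet in at most one point, and the orthogonal complement of a timelike vector is spacelike---but marshalling them uniformly over \emph{all} axes is the crux.

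With the lemma in hand I would run a dichotomy at $t_{0}$. For $t$ in a right neighbourhood of $t_{0}$ consider the Euclidean-continuous direction map $t\mapsto \vec v(t):=(\vec g(t)-\vec g(t_{0}))/(t-t_{0})$. If $\vec v$ is constant on some $(t_{0},t_{0}+\eta)$, then $f$ is the straight segment $t\mapsto(t,\vec g(t_{0})+(t-t_{0})\vec v)$ there---exactly the desired local linearity. If instead $\vec v$ is non-constant on every such interval, then, being continuous on a connected interval, it assumes infinitely many values arbitrarily close to $t_{0}$, so I may extract $t_{n}\downarrow t_{0}$ with the directions $\vec v(t_{n})$ pairwise distinct. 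The points $q_{n}:=f(t_{n})$ then accumulate only at $p$ and lie on pairwise distinct time axes through $p$, so the exclusion lemma furnishes an $F$-open neighbourhood $B\setminus\{q_{n}\}$ of $p$ missing every $q_{n}$; since $t_{n}\to t_{0}$, this contradicts $F$-continuity of $f$ at $t_{0}$. Only the constant alternative survives, and the symmetric argument on the left yields one-sided linearity on both sides of every $t_{0}$.

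Finally I would globalise. One-sided linearity shows that the incoming and outgoing time-axis directions are defined at every $t_{0}$ and are locally constant away from the parameters where they disagree; such breakpoints are therefore isolated, and by compactness of $I$ there are only finitely many. Consequently $f(I)$ is a finite concatenation of straight timelike segments along time axes, as claimed. The hard part will not be the dichotomy or the globalisation, which are routine once the machinery is in place, but the exclusion lemma: one must check relative closedness of the witnessing set on \emph{every} timelike line and \emph{every} spacelike hyperplane---including those not through $p$---and this is precisely where the defining interplay between the $1$-dimensional time axes and the $3$-dimensional space axes of $F$ is indispensable.
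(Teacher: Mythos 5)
The paper does not actually prove this statement: it is quoted from Zeeman's \emph{The Topology of Minkowski Space} and cited as such, so there is no in-paper proof to compare against. Your argument is, in essence, a correct reconstruction of Zeeman's original one --- Euclidean continuity of $f$ plus reparametrisation by the time coordinate, the ``Zeno sequence'' exclusion lemma showing that a Euclidean ball with a suitably chosen convergent sequence deleted remains $F$-open (checked axis by axis, using that time axes through $p$ each catch at most finitely many of the $q_n$ and that space axes through $p$ miss the timelike cone entirely), the resulting one-sided local linearity, and finiteness of the breakpoints by compactness of $I$. The only blemish is the parenthetical claim that $\vec g$ is injective: a map satisfying $|\vec g(t)-\vec g(s)|<|t-s|$ need not be injective (a constant map satisfies it), but this is harmless since what your argument actually uses is injectivity of $f$ itself, which follows from strict monotonicity of the time coordinate.
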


Let $G$ be the group of automorphisms of $M$, given by the Lorentz group,
translations and dilatations.

\begin{theorem}\label{main}
The group of homeomorphisms of the Minkowski space under $F$ is $G$.
\end{theorem}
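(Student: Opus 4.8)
The plan is to prove the two inclusions $G\subseteq H$ and $H\subseteq G$ separately, where $H$ denotes the group of homeomorphisms of $(M,F)$. The first direction is the routine one. The topology $F$ is singled out inside $\mathfrak{Z}$ purely by the geometric data of the timelike lines and the spacelike $3$-planes of $M$, together with the requirement that it be the finest topology inducing the Euclidean topology on each of them. Every generator of $G$ --- a Lorentz transformation, a translation, or a dilatation --- is a Euclidean homeomorphism that carries timelike lines to timelike lines and spacelike $3$-planes to spacelike $3$-planes, and hence preserves this defining data. Since $F$ is canonically determined by the data, each generator preserves $F$ and is therefore an $F$-homeomorphism; as $G$ is generated by such maps, $G\subseteq H$ follows.

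The substantial direction is $H\subseteq G$. Fix $h\in H$. The strategy is to show that $h$ is forced to respect the causal structure, and then to quote the rigidity of causal automorphisms. The decisive first step is to give a characterisation of the timelike lines that is intrinsic to the topology $F$ alone, so that any $F$-homeomorphism must permute them. This is exactly where Theorem~\ref{Theorem2} enters: fine-continuous, strictly $\ll$-monotone images of an interval are piecewise linear along time axes, which rigidly ties the fine-continuous curve structure to the straight timelike lines and lets one recognise a timelike line as a maximal subset on which $F$ restricts to the connected $1$-dimensional Euclidean topology with the appropriate local behaviour. Granting this recognition, both $h$ and $h^{-1}$ must carry timelike lines onto timelike lines.

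From the family of timelike lines one then recovers each light cone $\mathcal{N}^{+}(p)\cup\mathcal{N}^{-}(p)$ as the locus separating the timelike directions at $p$ from the spacelike ones; hence $h$ preserves the relation horismos $\rightarrow$, and with it the chronological order $\ll$ up to a possible reversal of time-orientation. At this point $h$, after composition with a time-reflection if necessary, is a bijection of $M$ preserving $\ll$ in both directions, that is, a causal automorphism. The conclusion then follows from the causality theorem of \cite{Zeeman2}, which asserts that in dimension at least three every such automorphism is a composite of a Poincar\'e transformation and a dilatation, and therefore lies in $G$. This is precisely where the hypothesis that $F$ induces the $3$-dimensional Euclidean topology on space axes (equivalently, that there are at least two spatial dimensions) is essential, the statement being false in the $(1+1)$-dimensional case. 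Combining the two inclusions yields $H=G$.

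I expect the main obstacle to be the first step of the hard direction: producing the $F$-intrinsic description of timelike lines, and consequently of light cones, that forces an arbitrary $F$-homeomorphism to preserve them. Everything downstream is either routine (the easy inclusion) or a citation of the already-established rigidity in \cite{Zeeman2}; the real work lies in showing that the causal skeleton of $M$ is topologically canonical for $F$, so that the passage from a merely topological symmetry to a causal automorphism can be made at all.
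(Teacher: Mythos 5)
The paper does not actually prove this statement: Theorem~\ref{main} is quoted from Zeeman \cite{Zeeman1} (with the causality input from \cite{Zeeman2}) and no argument is given, so there is no in-paper proof to compare yours against. Your overall architecture --- the easy inclusion $G\subseteq H$ from the $G$-invariance of the data defining $\mathfrak{Z}$ and $F$, and the hard inclusion $H\subseteq G$ by first showing that an $F$-homeomorphism preserves the causal skeleton and then invoking the rigidity theorem of \cite{Zeeman2} --- is indeed the strategy of Zeeman's original proof, and your easy direction is fine as stated.

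The gap is exactly where you locate it, and it is not a technicality: it is the entire content of the theorem. The characterisation you offer for time axes, namely ``a maximal subset on which $F$ restricts to the connected $1$-dimensional Euclidean topology,'' does not work. Every spacelike straight line is contained in some space axis, on which $F$ induces the $3$-dimensional Euclidean topology by the definition of the class $\mathfrak{Z}$; by transitivity of subspace topologies, $F$ therefore induces the $1$-dimensional Euclidean topology on every spacelike line as well, so your proposed property fails to separate timelike lines from spacelike ones. (By contrast, a light ray meets every time axis and every space axis in at most one point, so $F$ restricted to a light ray is discrete --- the topological distinction $F$ actually makes is between null lines and the rest, and the timelike/spacelike distinction has to be extracted from how a line sits inside its $F$-neighbourhoods, not from the induced topology on the line alone.) Nor can you repair this by appealing to Theorem~\ref{Theorem2} as stated, because its hypothesis ``strictly $\ll$-preserving'' is a causal condition, not a topological one; using it to prove that $h$ preserves the causal structure is circular unless you first supply a purely topological surrogate for that hypothesis. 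The downstream steps (recovering the double cones from the family of timelike lines, sorting out time-orientation, and citing \cite{Zeeman2}) are reasonable once the first step is secured, but as written your argument defers precisely the step that makes the theorem true, so it is an accurate outline of Zeeman's route rather than a proof.
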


G\"obel (see \cite{gobel}) showed that the results of Zeeman are valid
without any restrictions on the spacetime, showing in particular that
the group of homeomorphisms of a spacetime $S$, with respect to the
general relativistic analogue of $F$, is the group of all homothetic
transformations of $S$.

G\"obel defined Zeeman topologies in curved spacetimes as follows.
Let $T_M$ be the manifold topology, let $M$ be
a spacetime manifold and let $S$ be a set of subsets of $M$. A set $A \subset M$
is open in $Z(S,T_M)$, a topology in class $\mathfrak{Z}-\mathfrak{G}$ of Zeeman-G\"obel, if $A \cap B$ is open
in $(B,T_M|B)$ (the subspace topology of the manifold topology $(M,T_M)$
with respect to $(B,T_M)$), for all $B \in S$.



\section{The Path Topology of Hawking-King-McCarthy is in $\mathfrak{Z}-\mathfrak{G}$.}

Zeeman in \cite{Zeeman1}, last section, introduced three alternatives to his Fine topology,
one of which was described as follows:
\begin{definition}\label{Zeeman_Path_Topology}
$Z^T$ is the finest topology that induces the $1$-dimensional Euclidean topology on
every time axis; an open neighbourhood of $x \in M$, in this topology, is given by $B_\epsilon(x) \cap C^T(x)$.
\end{definition}

Even the fact that there is no clear reference in \cite{Hawking-Topology} (or in later
papers, like \cite{Malament}), that the
path topology $\mathcal{P}$ is the analogue of $Z^T$ for curved spacetimes, this is
trivially deduced within the frame of point-set topology. Here we adopt the much clearer notation of Low in \cite{Low_path}:
for each $x \in M$ and each open neighbourhood $U$ of $x$, let $I(p,U)$ denote
the set of points connected to $p$ by a timelike path lying in $U$ and by
$K(p,U)$ the set $I(p,U) \cup \{x\}$. By choosing an arbitrary Riemannian metric
$h$ on $M$, let $B_\epsilon(x)$ denote an open ball centered at $x$ with radius
$\epsilon >0$, with respect to $h$.

\begin{definition}\label{Path_Topology}
The {\em path topology}, $\cal{P}$, is defined to be the finest topology so that the induced
topology on every timelike curve coincides with the topology induced from the manifold
topology.
\end{definition}

For a proof of the following theorem see \cite{Hawking-Topology}:

\begin{theorem}\label{hawking_basis_for_path}
Sets of the form $K(p,U) \cap B_\epsilon(x)$ form a basis for the topology $\cal{P}$.
\end{theorem}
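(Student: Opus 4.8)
The plan is to prove the statement by first converting the universal-property definition of $\cP$ (Definition \ref{Path_Topology}) into a concrete description. Since $\cP$ is the \emph{finest} topology inducing the manifold topology on every timelike curve, I would first establish that $W\subseteq M$ is $\cP$-open if and only if $W\cap\gamma$ is open in $(\gamma,T_M|\gamma)$ for every continuous timelike curve $\gamma$. Indeed, any topology whose trace on each $\gamma$ equals $T_M|\gamma$ makes every inclusion $(\gamma,T_M|\gamma)\hookrightarrow M$ continuous, hence is coarser than the final topology described by this condition; conversely that final topology contains $T_M$, so its trace on each $\gamma$ equals $T_M|\gamma$, and it therefore is the finest coinciding topology. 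Writing $\mathcal{B}$ for the family of sets $K(p,U)\cap B_\epsilon(x)$, the proof then splits into (i) every member of $\mathcal{B}$ is $\cP$-open, and (ii) every $\cP$-open $W$ contains a member of $\mathcal{B}$ around each of its points.

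Step (i) is the routine direction. Fixing a timelike curve $\gamma$, I would check that $(K(p,U)\cap B_\epsilon(x))\cap\gamma$ is open in $\gamma$. The ball $B_\epsilon(x)$ is manifold-open, so $B_\epsilon(x)\cap\gamma$ is open. For $K(p,U)=I(p,U)\cup\{p\}$ one uses that $I^+(p,U)$ and $I^-(p,U)$ are manifold-open: if $p\notin\gamma$ then $K(p,U)\cap\gamma=I(p,U)\cap\gamma$ is open, while if $p\in\gamma$ then for parameters near the value giving $p$ the sub-arc of $\gamma$ issuing from $p$ is a timelike path lying in $U$, so those points lie in $I(p,U)$ and, together with $p$ itself, fill an open sub-interval of $\gamma$ inside $K(p,U)$. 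Hence each basic set is $\cP$-open, and the topology generated by $\mathcal{B}$ is contained in $\cP$.

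Step (ii) is the substantive direction. Given $\cP$-open $W$ and $x\in W$, the choice $p=x$, $U=B_\epsilon(x)$ puts $K(x,B_\epsilon(x))\cap B_\epsilon(x)$ in $\mathcal{B}$ and contains $x$, so it suffices to produce one $\epsilon$ with this set contained in $W$. I would argue by contradiction: if no $\epsilon$ works, pick $y_n\in I(x,B_{1/n}(x))\cap B_{1/n}(x)$ with $y_n\notin W$, so that $y_n\to x$ with each $y_n$ timelike-connected to $x$, and after passing to a subsequence assume all $y_n\in I^+(x)$ (the past case being dual). The key manoeuvre is to extract a further subsequence forming a chronological chain $\cdots\ll y_{n_2}\ll y_{n_1}$ still converging to $x$, and to concatenate the connecting timelike segments into a single continuous timelike curve $\gamma$ with $\gamma(0)=x$ passing through every $y_{n_k}$. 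Along this $\gamma$ the points $y_{n_k}\notin W$ accumulate at $x\in W$, so $W\cap\gamma$ is not open at $x$, contradicting the characterization from Step (i). This yields the desired $\epsilon$, proving $\cP\subseteq$ (topology generated by $\mathcal{B}$), and with Step (i) the two topologies coincide.

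The hard part will be the extraction of the chronological chain and the construction of the threading curve in Step (ii). A priori the connecting directions of the $y_n$ may degenerate toward the null cone, so no \emph{smooth} timelike curve through $x$ need capture them; the resolution is that \emph{continuous} (Lipschitz) timelike curves are flexible enough, and one must show that among points of $I^+(x)$ accumulating at $x$ one can always select a $\ll$-decreasing subsequence — equivalently, rule out via the local causal structure an antichain accumulating at $x$. This is precisely where the geometry of the null cone, rather than the bare manifold topology, is indispensable, and it is the technical heart of the result proved in \cite{Hawking-Topology}. Once (i) and (ii) are in hand, the basis axioms for $\mathcal{B}$ (covering of $M$ and the intersection condition) follow immediately.
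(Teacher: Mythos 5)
The paper does not actually prove this statement; it defers entirely to \cite{Hawking-Topology}, so your proposal has to be measured against that source rather than against an in-paper argument. Your two-step strategy is exactly the Hawking--King--McCarthy one: recast Definition \ref{Path_Topology} as ``$W$ is $\cP$-open iff $W\cap\gamma$ is $T_M|\gamma$-open for every timelike curve $\gamma$'' (the final topology of the inclusions), verify the candidate basic sets against this criterion, and for the converse thread a single timelike curve through a sequence of bad points accumulating at $x$. The final-topology characterization and Step (i) are correct as written (modulo the usual standing fact that $I^{\pm}(p,U)$ are manifold-open).

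The one place you stop short is the step you yourself flag as ``the hard part,'' and there you simultaneously overestimate the difficulty and leave it unproved. The extraction of the chronological chain is a one-line consequence of the openness of chronological pasts: since $y_{n_k}\in I^{+}(x,B_{1/n_k}(x))$, the set $I^{-}(y_{n_k},B_{1/n_k}(x))$ is a manifold-open set containing $x$, hence contains all but finitely many terms of the sequence $\{y_m\}$; choosing $y_{n_{k+1}}$ inside it yields both $y_{n_{k+1}}\ll y_{n_k}$ and a connecting timelike segment lying in $B_{1/n_k}(x)$. In particular no subset of $I^{+}(x)$ that is an antichain for $\ll$ can accumulate at $x$, so the scenario of connecting directions degenerating toward the null cone, which you raise as a possible obstruction, cannot occur. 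The second point your sketch glosses over is continuity of the concatenated curve at $x$: this is precisely why the $k$-th connecting segment must be taken inside the shrinking ball $B_{1/n_k}(x)$ rather than merely somewhere in $M$; with that choice the concatenation, parametrised so that $x$ is its endpoint, is a continuous future-directed timelike curve in the sense of \cite{Hawking-Topology}, the points $y_{n_k}\notin W$ accumulate at $x$ along it, and the contradiction with the $\cP$-openness of $W$ goes through. With these two observations supplied, your argument is complete and coincides with the cited proof.
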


In Section 1.3, from \cite{On-Two-Zeeman-Topologies}, we introduced a way to partition the space cone
$S(x)$, for an event $x$ in $M$, into two symmetrical subcones
$S^+(x)$ and $S^-(x)$, and constructed a space-like (non-causal) order $<$,
such that $x <y$ if $y \in S^+(x)$ and $x>y$ if $y \in S^-(x)$.
By $x\leq y$ we meant that either $y \in S^+(x)$ or $x \rightarrow y$
and, respectively, $y \leq x$ if either $y \in S^-(x)$ or $x \rightarrow y$,
where $\rightarrow$ denoted the irreflexive version of horismos.
Using this material, we introduced the following theorem:

\begin{theorem}\label{1}
The order $\leq$ induces a topology $T_{in}^{\le}$ in $\mathfrak{Z}$, which is
an interval topology generated by the spacelike order $\le$ and,
furthermore, $Z^T$ is the intersection topology of $T_{in}^\le$ and
the topology $T_{\mathbb{R}^4}$ on the Minkowski space $M$. In addition, $T_{in}^{\le}$ (and, consequently, $Z^T$)
is defined invariantly of the choice of the partition of the null cone
into two, for defining $\le$.
\end{theorem}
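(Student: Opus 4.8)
The plan is to work directly from the definition of the interval topology, make the subbasic and basic open sets of $T_{in}^{\le}$ explicit, and then read the three assertions off from them. First I would compute the $\le$-future and $\le$-past of an event $x$. By the definition of $\le$, the up-set is $\{y : x \le y\} = S^+(x)\cup\cN(x)$ and the down-set is $\{y : y \le x\} = S^-(x)\cup\cN(x)$, where $\cN(x)$ is the (full) null cone. Hence the generating subbasic sets of the future and past topologies are $M\setminus(S^-(x)\cup\cN(x))$ and $M\setminus(S^+(x)\cup\cN(x))$. The decisive computation is that, for a common vertex $x$, their intersection is
\[
\bigl(M\setminus(S^+(x)\cup\cN(x))\bigr)\cap\bigl(M\setminus(S^-(x)\cup\cN(x))\bigr)=C^T(x),
\]
the open time-cone of $x$, because $S^+(x)\cap S^-(x)=\emptyset$ while $S^+(x)\cup S^-(x)\cup\cN(x)$ is exactly the complement of $C^T(x)$. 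Thus every time-cone is a basic open set of $T_{in}^{\le}$, and this single fact drives the rest of the argument.

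For the first assertion, that $T_{in}^{\le}$ lies in $\mathfrak{Z}$, being an interval topology generated by $\le$ is immediate from the construction, so the content is membership in $\mathfrak{Z}$. I would verify the four defining properties in turn. On a time axis the basic sets $C^T(x)$ cut out open rays, and intersecting the time-cones of two vertices on the axis, one to the past and one to the future of a given point, yields arbitrarily short intervals, so the induced topology is precisely the $1$-dimensional Euclidean one. On a spacelike $3$-axis the roles of time-cone and space-cone interchange: the sets $M\setminus(S^{\pm}(x)\cup\cN(x))$ restrict to complements of spacelike half-cones, and finite intersections of these behave, within the slice, exactly as Alexandrov intervals behave in a lower-dimensional Minkowski space, so one recovers the $3$-dimensional Euclidean topology by the same argument Zeeman uses for the Fine topology. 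The two strict inclusions are then checked by exhibiting a $T_{in}^{\le}$-open set that is not Euclidean-open (a subbasic set, which contains points of the dividing surface between $S^+(x)$ and $S^-(x)$ that fail to be Euclidean-interior) and, dually, a point whose singleton is not $T_{in}^{\le}$-open. I expect this verification, and in particular the recovery of the full Euclidean topology on spacelike slices from spacelike-cone intersections, to be the main obstacle, since it is precisely where the geometry of the partition enters.

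For the second assertion I would invoke Lemma \ref{before main theorem}. Set $\mathcal{B}^{int}=\{B_1\cap B_2 : B_1\in\mathcal{B}_{in}^{\le},\ B_2\in\mathcal{B}_{\mathbb{R}^4}\}$ and argue that $\mathcal{B}^{int}$ is a base for $Z^T$. In one direction, taking $B_1=C^T(x)$ and $B_2=B_\epsilon(x)$ produces $C^T(x)\cap B_\epsilon(x)$, which is exactly the defining basic neighbourhood of $x$ in $Z^T$ of Definition \ref{Zeeman_Path_Topology}; hence every $Z^T$-open set is a union of members of $\mathcal{B}^{int}$. In the other direction I would show each $B_1\cap B_2$ is $Z^T$-open by verifying, point by point, that every $p\in B_1\cap B_2$ carries a neighbourhood $C^T(p)\cap B_\delta(p)$ contained in it; this uses that $B_2$ is Euclidean-open and that the interval-basic set $B_1$, being a finite intersection of the sets computed above, contains a time-cone around each of its points. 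Once $\mathcal{B}^{int}$ is shown to be a base for $Z^T$, part (2) of Lemma \ref{before main theorem} identifies $Z^T$ with the intersection topology $T_{in}^{\le}\cap T_{\mathbb{R}^4}$.

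Finally, for invariance under the choice of partition, the quickest route is to note that the time-cones $C^T(x)$ are manifestly independent of how $S(x)$ is split into $S^+(x)$ and $S^-(x)$, so the entire sub-collection of basic open time-cones is partition-free; what remains is to show that the subbasic sets $M\setminus(S^{\pm}(x)\cup\cN(x))$ coming from one partition are open in the interval topology determined by any other. I would do this by recovering each bounded spacelike region as a finite intersection of such complements together with time-cones, and observing that the recovered collection does not depend on the orientation chosen for the splitting; the two interval topologies then share a common base and therefore coincide. The invariance of $Z^T$ is then immediate, since $Z^T=T_{in}^{\le}\cap T_{\mathbb{R}^4}$ and the Euclidean factor is fixed, which is the force of the word \emph{consequently} in the statement.
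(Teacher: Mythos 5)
A preliminary caveat: this paper does not actually prove Theorem \ref{1}. The statement is imported from \cite{On-Two-Zeeman-Topologies} (``Using this material, we introduced the following theorem''), so there is no in-paper proof to compare you against. The closest the present text comes is the sketch for the primed topologies in the later section, where the authors intersect the complements $M - S^+(x)$ and $M - S^-(x)$ to obtain a causal cone. Your computation that the subbasic sets $M\setminus(S^{\pm}(x)\cup\mathcal{N}(x))$ intersect, at a common vertex, in the time cone $C^T(x)$ is exactly the undashed analogue of that, and routing the second assertion through Lemma \ref{before main theorem} and Definition \ref{Zeeman_Path_Topology} is clearly the intended mechanism. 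So your skeleton matches the authors' method; the problems are in the two steps you defer.

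First, membership in $\mathfrak{Z}$. As defined in this paper, $\mathfrak{Z}$ consists of topologies \emph{strictly finer} than the Euclidean topology, i.e.\ you must show $T_{\mathbb{R}^4}\subsetneq T_{in}^{\le}$. You only check the non-inclusion $T_{in}^{\le}\not\subseteq T_{\mathbb{R}^4}$ (a subbasic set that is not Euclidean-open), which is the wrong direction, and the required containment in fact fails: each subbasic set $M\setminus(S^{\pm}(x)\cup\mathcal{N}(x))$ contains the whole chronological future $I^+(x)$, so every nonempty finite intersection of subbasic sets contains the unbounded common chronological future of the finitely many vertices involved. Hence every nonempty $T_{in}^{\le}$-basic set is unbounded and no bounded Euclidean ball is $T_{in}^{\le}$-open. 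The conclusion can only hold under a weakened reading of ``in $\mathfrak{Z}$'' (keeping the induced-topology conditions on the axes and dropping fineness over $T_{\mathbb{R}^4}$); your proof should say which reading it uses rather than promise a verification that cannot succeed as literally stated. The same observation undercuts your invariance argument, which proposes to recover ``bounded spacelike regions'' as finite intersections of subbasic sets: no bounded set is so recoverable.

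Second, the assertion that each interval-basic set ``contains a time-cone around each of its points,'' which you need both for $T_{in}^{\le}\subseteq Z^T$ (without it $\mathcal{B}^{int}$ cannot be a base for $Z^T$, since the join topology contains $T_{in}^{\le}$) and for the invariance claim. This is precisely where the unreproduced details of the partition of $S(x)$ into $S^+(x)$ and $S^-(x)$ enter: for a point $p$ on the hypersurface separating $S^+(x)$ from $S^-(x)$, a truncated time cone $C^T(p)\cap B_\delta(p)$ will in general meet both $S^+(x)$ and $S^-(x)$, so $M\setminus(S^-(x)\cup\mathcal{N}(x))$ need not contain any $Z^T$-neighbourhood of $p$. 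Whether this is avoided depends on how the separating surface is chosen and to which half it is assigned --- information that lives in \cite{On-Two-Zeeman-Topologies} and is not in this paper. As written, this step is a gap, not a routine point-by-point verification.
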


We summarise our observations in the following theorem:

\begin{theorem}\label{path-is-Zeeman}
The analogue of $Z^T$, in curved spacetimes, is the path topology $\mathcal{P}$.
In particular, the order $\leq$ induces a topology $T_{in}^{\le}$ in $\mathfrak{Z} - \mathcal{G}$, which is
an interval topology generated by the spacelike order $\le$ and,
furthermore, $\mathcal{P}$ is the intersection topology of $T_{in}^\le$ and
the topology $T_M$ on the spacetime manifold $M$. In addition, $T_{in}^{\le}$ (and, consequently, $\mathcal{P}$)
is defined invariantly of the choice of the partition of the null cone
into two, for defining $\le$.
\end{theorem}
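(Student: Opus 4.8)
The plan is to mirror the structure of Theorem~\ref{1}, transporting each of its assertions from the Minkowski setting to an arbitrary spacetime by means of G\"obel's extension of Zeeman's programme. First I would establish the identification of $\mathcal{P}$ with the curved analogue of $Z^T$ purely at the level of definitions. Comparing Definition~\ref{Zeeman_Path_Topology} with Definition~\ref{Path_Topology}, the role played by the time axes and time-cones $C^T(x)$ in the flat case is taken over, in the curved case, by the timelike curves and by the local timelike-connected sets $I(p,U)$. Since as soon as there is a spacetime there is at every event a well-defined null cone with its interior, the extremal (finest) requirement ``induces the $1$-dimensional Euclidean topology on every time axis'' and the requirement ``the induced topology on every timelike curve coincides with that of $T_M$'' describe the same property once ``time axis'' is read as ``timelike curve''. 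This identification is precisely what G\"obel's work licenses: his $Z(S,T_M)$ construction uses no linear or metric structure, only the cones, so each Zeeman-type topology in $\mathfrak{Z}$ has a canonical counterpart in $\mathfrak{Z}-\mathcal{G}$.

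Next I would verify that the spacelike order $\le$, built from the partition $S(x)=S^+(x)\cup S^-(x)$ together with horismos, induces an interval topology $T_{in}^{\le}$ lying in $\mathfrak{Z}-\mathcal{G}$. Here the subbasic sets are the complements $M\setminus I^{-}_{\le}(x)$ and $M\setminus I^{+}_{\le}(x)$ of the $\le$-past and $\le$-future sets, and a basic set is a finite intersection of these. That $T_{in}^\le$ is finer than $T_M$ but coarser than the discrete topology, and that it restricts correctly on timelike and spacelike curves, is checked exactly as in the flat case, the point being that the cones alone determine these local restrictions and that G\"obel's framework guarantees the same bookkeeping under curvature.

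The heart of the argument is the intersection-topology identity $\mathcal{P}=T_{in}^\le\cap T_M$, for which I would invoke Lemma~\ref{before main theorem}. Taking $\mathcal{B}_1$ to be the basis of $T_{in}^\le$ just described and $\mathcal{B}_2=\{B_\epsilon(x)\}$ the metric balls generating $T_M$, it suffices to show that $\{B_1\cap B_\epsilon(x): B_1\in\mathcal{B}_1\}$ is a basis for $\mathcal{P}$, i.e. that it generates the same sets as the Hawking-King-McCarthy basis $\{K(p,U)\cap B_\epsilon(x)\}$ of Theorem~\ref{hawking_basis_for_path}. The decisive observation is local: inside a sufficiently small ball the complement of the spacelike cones is precisely the causal cone, so that a basic $\le$-interval intersected with $B_\epsilon(x)$ collapses to $I(p,B_\epsilon(x))$ together with the horismos directions, and adjoining the centre point yields $K(p,U)$. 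To make this rigorous I would pass to a neighbourhood on which the causal cones are uniformly comparable to the Minkowski cones, thereby reducing the two containments to the already-established flat statement of Theorem~\ref{1}; the two basic collections are then mutually refining and, by Lemma~\ref{before main theorem}(2), define the single topology $\mathcal{P}$.

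The step I expect to be the main obstacle is the fine behaviour at the light cone in the intersection argument, namely controlling the horismos contribution so that the intersection with $B_\epsilon(x)$ delivers exactly $K(p,U)=I(p,U)\cup\{x\}$ and neither a larger nor a smaller set. Under curvature one cannot argue globally, so the whole argument must be confined to neighbourhoods on which the cone structure is uniformly comparable to the flat model; establishing this comparability cleanly, and checking that the boundary (null) points are neither spuriously included nor lost, is the delicate part. Finally, the invariance of $T_{in}^\le$, and hence of $\mathcal{P}$, under the choice of the partition $S(x)=S^+(x)\cup S^-(x)$ follows as in Theorem~\ref{1}: the interval topology symmetrises past and future, so the orientation chosen to distinguish $S^+$ from $S^-$ is annihilated upon forming finite intersections of $\le$-past and $\le$-future complements, leaving the resulting topology independent of the partition.
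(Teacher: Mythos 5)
Your proposal follows essentially the same route as the paper, which offers no separate proof but simply ``summarises observations'': identify the basic sets $K(p,U)\cap B_\epsilon(x)$ of $\mathcal{P}$ as the curved analogues of Zeeman's $B_\epsilon(x)\cap C^T(x)$, note via G\"obel that the order $\le$ and hence $T_{in}^{\le}$ depend only on the interior/boundary/exterior structure of the null cone present at every event, and then transport Theorem~\ref{1} (including the intersection-topology identity via Lemma~\ref{before main theorem} and the partition-invariance) verbatim to the curved setting. Your sketch is in fact more explicit than the paper about the one genuinely delicate point --- the local comparability of cones and the treatment of the null boundary in the basis comparison --- which the paper dismisses as ``trivially deduced within the frame of point-set topology.''
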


From Theorem \ref{path-is-Zeeman}, we deduce that the path topology $\mathcal{P}$ is locally an order topology,
whose order is non causal and spacelike. It is interesting that this spacelike order
generates open sets that are timelike; there is, somehow, a link
(a duality perhaps) between spacelike and timelike that could be
further explored.

\section{The L.C.T. fails under a $\mathcal{P}$ environment.}

R.J. Low showed (see \cite{Low_path}) that if in a curved spacetime we substitute
the manifold topology with $\mathcal{P}$, then the Limit Curve Theorem (L.C.T.) fails to hold.
We read, in the conclusion of this article, that this suggests that although the path topology
is of great interest from the point of view of encapsulating the differential and causal structure
of spacetime, it is nevertheless inappropriate for at least some important aspects of the study
of the causal structure, where the manifold topology remains both technically easier to work with
and fruitful. A natural question though
could be the following: why is it not fruitful that the L.C.T. fails, while it holds for a topology (the
manifold one) which,
as one can read in a list of arguments in \cite{Zeeman1} and \cite{gobel}, misses
important elements of the spacetime?

We will give next a list of topologies,
all in the class $\mathfrak{Z}-\mathfrak{G}$, where the L.C.T. fails. These topologies have a common characteristic:
 they all admit a countable basis and their basic-open sets depend on the causal structure of
the spacetime.

\section{Six Spacetime Topologies where the L.C.T. fails.}

Before we proceed, we find it important to mention the version of L.C.T. given in \cite{Low_path}
(for a further reading see \cite{Geroch}): \\ If $\gamma_n$ is a sequence of causal paths and
$x_n \in \gamma_n$ with $x$ a limit point of $\{x_n\}$, then there is an endless causal path
$\gamma$ through $x$, which is a limit curve of $\{\gamma_n\}$, all within the frame of the
manifold topology.

We should also mention that by the relation $\ll^=$ we denote
the irreflexive causal order $\prec$, i.e.
$x \ll^= y$ if either $y$ lies on the future timecone of $x$
or future lightcone of $x$, but $y$ cannot be equal to $x$. By
$\rightarrow^{irr}$ we define irreflexive horismos, that is,
$x \rightarrow^{irr} y$, iff $y$ lies on the future lightcone
of $x$ but $y$ cannot be equal to $x$.

\begin{theorem}\label{2}
There are six topologies, in a spacetime manifold, which admit a countable basis, they
incorporate the causal structure and the L.C.T. fails with each one of them respectively.
These are the intersection topologies $Z$, $Z^T$, $Z^S$ and the interval topologies
$T_{in}^{{\rightarrow^{irr}}}$, $T_{in}^{\le}$ and $T_{in}^{\ll^=}$, which are all in the class
$\mathfrak{Z} - \mathfrak{G}$.
\end{theorem}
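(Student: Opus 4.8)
The plan is to establish the four asserted properties — membership in $\mathfrak{Z}-\mathfrak{G}$, possession of a countable basis, incorporation of the causal structure, and failure of the L.C.T. — one at a time, and in each case to reduce the six topologies to a few representative arguments by exploiting the structural relations already in place. For membership in $\mathfrak{Z}-\mathfrak{G}$ I would invoke G\"obel's criterion, namely that a topology is of the form $Z(S,T_M)$ exactly when its trace on each member of a suitable family $S$ of subsets coincides with the manifold trace. The three intersection topologies $Z$, $Z^T$, $Z^S$ are then handled by Lemma \ref{before main theorem} together with Theorem \ref{path-is-Zeeman}, which already places $Z^T=\mathcal{P}$ in the class; the cases $Z$ and $Z^S$ follow by the same intersection mechanism with the time-cone replaced by the full causal cone and by the space-cone respectively. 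For the interval topologies $T_{in}^{\rightarrow^{irr}}$, $T_{in}^{\le}$ and $T_{in}^{\ll^=}$ I would verify directly that each induces the correct $1$-dimensional and $3$-dimensional Euclidean traces on the time and space axes, so that each is the $Z(S,T_M)$ attached to the cone family determined by its generating relation.

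For the countable basis, the governing fact is that the manifold topology $T_M$ of a spacetime is second countable and hence carries a countable base $\mathcal{B}_M$. By part (1) of Lemma \ref{before main theorem} each intersection topology has base $\{B_1\cap B_2\}$, and since a countable union of countable families is countable, the intersection topologies inherit a countable base directly from $\mathcal{B}_M$ together with the countable cone-bases. The interval topologies are more delicate: although the defining subbase $\mathcal{S}^{+}\cup\mathcal{S}^{-}$ is indexed by the uncountable event set, I would use second countability of $M$ to fix a countable dense set of events and argue that the corresponding countable subfamily of subbasic complements $M\setminus I^{\mp}(x)$ already generates the same topology, the induced Euclidean trace on each axis guaranteeing that no open set is lost.

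Incorporation of the causal structure is then essentially immediate from the constructions: in every case the basic open sets are assembled from the chronological, causal, horismos or spacelike cones of events, so that the relations $\ll$, $\prec$, $\rightarrow$ and the derived order $\le$ are all recoverable from the open sets.

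The substantive part is the failure of the L.C.T., and here the plan is to adapt Low's single counterexample for $\mathcal{P}=Z^T$ to all six topologies. Low produces a sequence of causal paths $\gamma_n$ with points $x_n\in\gamma_n$ having a limit point $x$, for which no endless causal curve through $x$ is a limit curve once convergence is measured in $\mathcal{P}$ rather than in $T_M$; the mechanism is precisely that these topologies omit the null cone $\mathcal{N}^{+}(x)$ from their basic neighbourhoods, so any candidate limit curve forced to meet the horismos boundary fails to be continuous. Since each of the remaining five topologies shares this feature — its basic open sets, whether intersections or order-intervals, likewise exclude the light-cone — the same sequence $\{\gamma_n\}$, after reorienting the cone partition in the spacelike cases, defeats the limit-curve construction. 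The main obstacle I anticipate is establishing robustness of this counterexample under passage from $\mathcal{P}$ to the strictly coarser interval topologies $T_{in}^{\le}$ and $T_{in}^{\ll^=}$: one must confirm that the points $x_n$ still admit no convergent causal limit curve in the coarser setting, which may require replacing Low's sequence by one tailored to the generating relation rather than transporting it verbatim.
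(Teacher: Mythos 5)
Your proposal matches the paper's proof in its essential strategy: the constructions, the countable bases, and the incorporation of the causal structure are delegated to the source papers \cite{Order-Light-Cone} and \cite{On-Two-Zeeman-Topologies}, and the failure of the L.C.T. is obtained by running Low's counterexample from \cite{Low_path} through each of the six topologies in turn, the operative mechanism being exactly the one you identify --- the light cone is excised from every basic open set, so the unique candidate limit curve $\gamma$ admits a neighbourhood of some $p \in \gamma$ that misses the curves $\gamma_n$. The difficulty you flag for the coarser interval topologies $T_{in}^{\le}$ and $T_{in}^{\ll^=}$ (whose basic open sets are unbounded cones, so that being a limit curve is easier, not harder) is a legitimate concern, but the paper's own proof does not address it either; it simply asserts that Low's argument applies ``for each of the six topologies individually.''
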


\begin{proof}
The construction of the topologies $Z$ and $T_{in}^{\rightarrow}$ is described analytically in \cite{Order-Light-Cone}
and of the rest four in \cite{On-Two-Zeeman-Topologies}. By looking at the arguments of
R.J. Low in \cite{Low_path} (section V., straight after Proposition 6), for each
of the six topologies individually, we see that in
each case, if $U$ is an open set in the assigned topology, not containing
the origin, and $p$ is an event in $\gamma \cap U$, then $\gamma$
is not a limit curve of $\{\gamma_n\}$, under the assigned topology. The uniqueness
of $\gamma$ implies the failure of the L.C.T.
\end{proof}

In particular, $Z^T = \mathcal{P}$ is the intersection topology between $T_{in}^{\le}$ and
the manifold topology $T_M$; $Z^T$ is the intersection topology between $T_{in}^{\ll^=}$ and $T_M$, while $Z$ is
the intersection topology between
$T_{in}^{{\rightarrow^{irr}}}$ and $T_M$. All these topologies can be constructed
in a spacetime independently from whether we are talking about Minkowski space or a curved spacetime
(since they depend on topological characteristics of the nullcone),
so we do not consider it important to use deferent notations for the general relativistic
against the special relativistic analogues.

The basic-open sets of $T_{in}^{\le}$ are timecones and of $T_{in}^{\ll^=}$
spacecones, so there is a duality between the topologies $\mathcal{P}$ and $Z^S$.
Also, $T_{in}^{\ll^=}$ is generated by an irreflexive causal relation,
so we again see another possible duality; a causal order generates a topology whose
basic-open sets are spacecones, a similar pattern that we mentioned about $\mathcal{P}$.

\section{Four Candidate Spacetime Topologies passing the L.C.T.}

It can be easily seen that the L.C.T., in the topologies of the previous
section, failed because, in all cases, the light cone was extracted when forming
the basic open sets. What about if the light-cone was ``replaced back'', and considered
the causal-cone (time-cone union light-cone) for forming basic-open sets? In this case,
$Z'$ (the topology which would have as basic-open sets those made of by
 basic-open sets of $Z$ adding the lightcone) would become the usual manifold topology with the Riemannian balls and $[T_{in}^{\rightarrow}]'$ would
be the indiscrete topology. We are not interested for these two cases,
especially for the indiscrete one. Now, $[Z^T]'$ would have as basic
open sets the bounded causal cones (time-cone union light cone, intersected
with a ball from the manifold topology) and $[T_{in}^{\le}]'$ would have as basic
open sets the causal cones. Similarly for $[Z^S]'$: its basic open sets would
be the bounded closure of space-cones ),
i.e. space-cone union light cone, intersected with a ball from the manifold topology.
In $[T_{in}^{\ll^=}]'$, the basic open sets would be unbounded closures of space-cones.

\begin{theorem}\label{3}
There are four topologies, in a spacetime manifold, which admit a countable basis, they
incorporate the causal structures and the L.C.T. holds for each one of them respectively.
These are the topologies $(Z^T)'$, $(Z^S)'$,
 $(T_{in}^{\le})'$ and $(T_{in}^{\ll^=})'$.
\end{theorem}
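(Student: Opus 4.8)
The plan is to establish the three asserted properties—second countability, compatibility with the causal structure, and the validity of the L.C.T.—for each of the four primed topologies by exploiting the single structural feature that distinguishes them from the topologies of Theorem \ref{2}: their basic open sets \emph{contain}, rather than excise, the light cone $\mathcal{N}^+(x)\cup\mathcal{N}^-(x)$ through the relevant vertex.

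First I would dispose of the two easy clauses. For the countable basis I would argue exactly as for the unprimed topologies in Theorem \ref{2}: the manifold $M$ is second countable in $T_M$, so fixing a countable dense set $D\subset M$ and rational radii $\epsilon\in\mathbb{Q}^+$ and forming the sets (causal cone $\cap\,B_\epsilon(x)$) for $(Z^T)'$, or (closure of space-cone $\cap\,B_\epsilon(x)$) for $(Z^S)'$, with vertices in $D$, yields a countable family; that this family is a base follows because the timecone and space-cone interiors vary lower-semicontinuously with the vertex, so every primed basic set about a point $p$ contains a member of the countable family still containing $p$. The unbounded cases $(T_{in}^{\le})'$ and $(T_{in}^{\ll^=})'$ are treated identically after dropping the ball factor. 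That the four topologies incorporate the causal structure is immediate from their construction, since each basic open set is a causal cone, a closed space-cone, or a bounded version thereof, so the causal order $\prec$ together with horismos can be read off directly from the inclusion lattice of basic open sets.

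The substance of the theorem is the L.C.T. Here the key lemma I would prove is that, in contrast with Theorem \ref{2}, every causal curve through a point $p$ is continuous in the appropriate primed topology, because the nearby points of the curve—which are null- or timelike-related to $p$—now lie in every basic open set about $p$ rather than being separated from it by the deleted light cone. Concretely, for $(Z^T)'$ and $(T_{in}^{\le})'$ this shows that the topology induced on any causal curve is exactly the one-dimensional manifold topology, so that null curves—the curves responsible for Low's counterexample—are restored to continuous causal paths. Granting this, I would invoke the manifold L.C.T. to produce the endless causal limit curve $\gamma$ through $x$, and then verify that $\gamma$ is a limit curve in the primed topology: given $p\in\gamma$ and a primed basic neighbourhood $B\ni p$, the inclusion of the light-cone directions in $B$ forces the curves $\gamma_n$ approaching $\gamma$ in $T_M$ to meet $B$ for all large $n$, which is precisely the step that fails—by the argument of Low recalled in the proof of Theorem \ref{2}—when the light cone is excised. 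The obstruction mechanism of Theorem \ref{2} is thus reversed.

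The main obstacle I anticipate is the treatment of the two space-cone topologies $(Z^S)'$ and $(T_{in}^{\ll^=})'$. There the continuity lemma is genuinely more delicate, since adding the light cone restores continuity of null curves but timelike curves remain separated from the space-cone neighbourhoods; consequently the transfer of the L.C.T. cannot be carried out verbatim and must be routed through the spacelike/null duality already flagged after Theorem \ref{2}, testing limit curves against closed space-cones rather than against timecones. Making this duality precise—that is, checking that the obstruction in Low's argument disappears for a causal $\gamma$ tested against closed space-cones, and that no new counterexample is generated by sequences of null curves—is the step that requires the most care, and is where I would concentrate the bulk of the argument.
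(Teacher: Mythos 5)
Your proposal is built on the same core observation as the paper's proof, which in its entirety reads: all basic open sets of the dashed topologies contain the light cone, so ``a similar argument to the proof of Theorem \ref{2}'' yields that the L.C.T. holds for each of the four topologies. In other words, the paper merely notes that Low's obstruction---the excision of the light cone---has been reversed, and stops there. You start from the same observation but correctly recognise that it is not by itself a proof: the disappearance of one particular counterexample does not establish the theorem, so you add a positive verification step (a continuity lemma for causal curves in the primed topologies, followed by importing the manifold L.C.T. and checking the limit-curve property against primed basic neighbourhoods), plus an explicit second-countability argument via a countable dense set and rational radii that the paper omits entirely. This buys genuine rigour for $(Z^T)'$ and $(T_{in}^{\le})'$. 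More importantly, the difficulty you flag for the two space-cone topologies is real and is silently passed over by the paper: a basic $(Z^S)'$- or $(T_{in}^{\ll^=})'$-neighbourhood of a point $p$ on a timelike curve contains no nearby points of that curve, since those lie in the open timecone, which is still excised from a closed space-cone; hence the continuity lemma fails there and the paper's ``similar argument'' cannot be carried out verbatim. Your proposed detour through the spacelike/null duality is precisely the missing step that would have to be made precise for the theorem's claim about $(Z^S)'$ and $(T_{in}^{\ll^=})'$ to stand; neither your sketch nor the paper actually supplies it, so on those two topologies both arguments remain incomplete, but yours at least identifies where the work lies.
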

\begin{proof}
All basic-open sets of the dashed topologies are containg the light cone, so a similar
argument to the proof of Theorem \ref{2} leads us to the conclusion
that the L.C.T. holds for each one of the four topologies, individually.
\end{proof}

A few more comments about the construction of the dashed topologies.  Both $(T_{in}^{\le})'$ and $(T_{in}^{\ll^=})'$
are interval topologies and $(Z^T)'$, $(Z^S)'$ intersection topologies.

To see, for example, if $(Z^T)'$ is the intersection topology of the interval topology $(T_{in}^{\le})'$ and the
manifold topology $T_M$ and, if so,
to find the order from which this interval topology is
induced, we first consider the complements of the sets:
\[ S^+(x) = \{y \in M : y > x\},~~~~~(1)\]
and dually for $M - S^-(x)$; these will be considered subbasic-open sets. Then, we observe that the intersection of the subbasic
sets $M - S^+(x)$ and $M - S^-(x)$ gives a $(T_{in}^{\le})'$ open set (a causal cone, that is,
the time cone at $x$ union the light cone)
and, consequently, a $Z^T$ open set if we intersect it with a ball $N^M_\epsilon(x)$ of $T_M$, because for each
$x$, $M - S^+(x) \cap M - S^-(x) \cap N^M_\epsilon(x)$ gives a neighbourhood $N^M_\epsilon(x)$ of some
radius $\epsilon$, under a Riemannian metric, with the space
cone removed, but $x$ is kept. The considerations for the remaining topologies are similar.

\section{A Few Questions.}

{\bf{Conjecture.}} We conjecture that Theorem \ref{2} exhausts all
possible spacetime topologies which belong to the class $\mathfrak{Z}-\mathfrak{G}$, admit
a countable basis, incorporate the causal and conformal structure
L.C.T. fails at each one, respectively. This is due to the fact that
these topologies have basic-open sets which depend on the structure
of the null cone and, in each case, the light-cone is subtracted
when forming the basic-open sets.

{\bf{Question.}} How many alternative spacetime topologies are there, that
admit a countable basis, they incorporate the causal structure and where
the L.C.T. holds at each one, respectively? There is a number of recent
articles, which define such topologies which do not have any obvious
link to the topologies in class $\mathfrak{Z}-\mathfrak{G}$, such as in \cite{Seth}
and \cite{Sumati}.

{\bf{Question.}} Changing topologies in the class $\mathfrak{Z}-\mathfrak{G}$ looks
like ``dressing'' the spacetime with different clothes, while the spacetime
itself remains the same; or, maybe not? Topologies in $\mathfrak{Z}-\mathfrak{G}$ are compatible
with the metric (see \cite{gobel}); in some of them,
the topologies in Theorem \cite{Zeeman2} lead to the conclusion that
it may not be possible to formulate the same sufficient conditions for
geodesic incompleteness as it is usually done using the manifold topology,
while for some others (including the Fine topology, in $\mathfrak{Z}-\mathfrak{G}$)
the L.C.T. holds. The question is: what is the physical interpretation
of this phenomenon?

\section{Implications in Quantum Gravity.}

The implications of the above considerations for gravitational theories might be considerable. We only comment on an issue that has been raised in the causal set program for quantum gravity. There is a growing sense about an underlying non-locality of gravitational
interactions at the quantum level.  The question that has to be addressed is how one makes the transition from the quantum non-local
theory to  a classical local theory. This was eloquently formulated in \cite{Rafael}. Since locality is fundamentally a  topological issue, it might be
possible to re-frame this question in topological terms as a ``phase transition" between two different topologies one operating at the
quantum and the other at the classical levels. How exactly this transition occurs, or if it occurs at all, can be the subject of future work.

\section*{Acknowledgement} The author K.P. would like to thank Robert Low for his comments
on the interval topology, Fabio Scardigli for the discussions on the topological structure
of a spacetime and Nikolaos Kalogeropoulos, for the inspiring
 discussions on quantum gravity, based on his article \cite{Kalogeropoulos}.

\end{document}